\newtheorem{example}{Example}
\newtheorem{conjecture}{Conjecture}
\newtheorem{remark}{Remark}
\newtheorem{proposition}{Proposition}
\newtheorem{fact}{Numerical fact}
\newtheorem{problem}{Problem}
\def\B{\mathcal B}
\def\rmc(#1,#2){RM(#1,#2)}
\def\form(#1,#2){H(#1,#2)}
\def\val(#1,#2){V(#1,#2)}
\def\boole(#1){ B(#1) }
\def\bst(#1,#2,#3){\boole(#1,#2,#3)}
\def\tildeboole(#1){ \widetilde B(#1) }
\def\fd{{\mathbb F}_2}
\def\fdm{{\mathbb F}^m_2}
\def\aglm{{\textsc{agl}}(m,2)}
\def\der(#1,#2){{\rm Der}_{#2} (#1)}
\def\stab(#1){\textsc{stab}(#1)}
\def\stablevel(#1,#2){\textsc{stab}_{#1}(#2)}
\def\rmq(#1,#2,#3){\rmc(#1,#2)/\rmc(#3,#2)}
\def\walsh(#1,#2){\widehat{#1}(#2)}
\def\level(#1){\underset{#1}{=}}
\def\binogauss(#1,#2){\genfrac{[}{]}{0pt}{}{#1}{#2}}
\newcommand{\card}[1]{\vert{#1}\vert}
\def\val(#1){{\rm val}(#1)}
\def\anf(#1){{\rm anf}(#1)}
\def\fix(#1,#2,#3,#4){{{\rm fix}^{#1,#2}_{#3}}{(#4)}}
\def\classe(#1,#2,#3){{{\mathcal T}(#1,#2,#3)}}
\def\stab(#1){\textsc{stab}(#1)}
\def\stablevel(#1,#2){\textsc{stab}^{#1}(#2)}
\def\stableveldim(#1,#2,#3){\textsc{stab}^{#1}_{#2}(#3)}
\def\stableveldeg(#1,#2,#3,#4){\textsc{stab}^{#1,#2}_{#3}(#4)}
\def\level(#1){\underset{#1}{\sim}}
\def\bound(#1,#2){\underset{#1}{\overset{#2}{\sim}}}
\def\modulo(#1,#2){\mod\rmc(#1,#2)}
\def\pow#1.#2{\tiny$10^{#1.#2}$}
\def\todo#1{\begin{center}TODO : #1\end{center}}
\def\level(#1){\underset{#1}{=}}
\def\val(#1){{\rm val}(#1)}
\def\anf(#1){{\rm anf}(#1)}
\def\fix(#1,#2,#3,#4){{{\rm fix}^{#1,#2}_{#3}}{(#4)}}
\def\classe(#1,#2,#3){{{\mathcal T}(#1,#2,#3)}}
\def\stab(#1){\textsc{stab}(#1)}
\def\stablevel(#1,#2){\textsc{stab}^{#1}(#2)}
\def\stableveldim(#1,#2,#3){\textsc{stab}^{#1}_{#2}(#3)}
\def\stableveldeg(#1,#2,#3,#4){\textsc{stab}^{#1,#2}_{#3}(#4)}
\def\level(#1){\underset{#1}{\sim}}
\def\bound(#1,#2){\underset{#1}{\overset{#2}{\sim}}}
\def\modulo(#1,#2){\mod\rmc(#1,#2)}
\def\pow#1.#2{\tiny$10^{#1.#2}$}
\def\degV(#1,#2,#3){\deg_{#1+#2}(#3)}
\title{On the Normality of Boolean quartics}
\author{Valérie Gillot}
\author{Philippe Langevin}
\address{Imath, university of Toulon}
\email{\{valerie.gillot,philippe.langevin\}@univ-tln.fr}
\author{Alexandr Polujan}
\address{Otto-von-Guericke-Universit\"{a}t}
\email{alexandr.polujan@gmail.com}
\date{April 30th 2024}
\begin{document}

\maketitle
\begin{abstract}
In the BFA'2023 conference {\color{black} paper}, A. Polujan, L. Mariot and S. Picek {\color{black} exhibited the first} example of a non-normal {\color{black} but} weakly normal bent function in dimension 8. In this note, we present numerical approaches  based on the classification  of Boolean spaces to explore in detail the normality of bent functions of 8 variables and we complete S.~Dubuc's results for dimensions less or equal to 7. Based on our investigations, we show that all bent functions in 8 variables are normal or weakly normal. Finally, we conjecture that more generally all Boolean functions of degree at most 4 in 8 variables are normal or weakly normal. 
\end{abstract}

\section{Introduction}

In the context of random generation of non-normal bent {\color{black} functions in 8 variables, a non-normal but weakly normal bent quartic} was presented by A. Polujan, L. Mariot and S.  Picek at {\color{black}BFA'2023; see~\cite{PMP_BFA23}}. Furthermore, a non-normal function of degree 6 in 8 variables {\color{black} was} presented by S. Dubuc in her PhD thesis, and we verify that this function is in fact weakly normal. For that dimension, the main question concerns the existence
of Boolean functions that are neither normal nor weakly normal, we will say abnormal.  For that purpose, we propose 
a numerical approach based on the classification of Boolean spaces under the action of the affine general linear group
to study the relative degree of Boolean functions. We namely use most of the old  \cite{PLGL-2011} 
and recent \cite{bfapaper2022} classification results obtained in the last decade to analyse the 
maximal relative degree that a Boolean function can have. Our main objective is to update the knowledge
of Boolean functions in small dimensions, thus completing the {\color{black}results in the} PhD work of S. Dubuc \cite{dubuc},  
and answering definitively on the (weak)-normality of bent function and related questions 
in 8 variables.

Let $\fd$ be the finite field of order $2$. Let $m$ be a positive integer. 
We denote $\boole(m)$ the set of Boolean functions $f \colon \fdm \rightarrow\fd$. Every Boolean 
function has a unique algebraic reduced representation:
\begin{equation}\label{ANF}
f(x_1, x_2, \ldots, x_m ) = f(x) = \sum_{S\subseteq \{1,2,\ldots, m\}} a_S X_S,
\quad a_S\in\fd, \ {\color{black} X_S = \prod_{s\in S} x_s}.
\end{equation}
The degree of $f$ is the maximal cardinality of $S$ with  $a_S=1$ in
the algebraic form. In this paper, we conventionally fix the degree
of the null function to zero. We denote by $\boole(s, t, m)$ the space of Boolean
functions of valuation greater than or equal to $s$ and of degree less 
than or equal to $t$. Note that  $\boole(s,t,m)=\{0\}$ whenever $s>t$.
The affine general  linear group of $\fdm$, denoted by $\aglm$, acts naturally 
over all these spaces. A system of representatives of $\boole(s,t,m)$ is denoted $\tildeboole(s,t,m)$. In next sections, we are using the classifications 
available in the project \cite{project}, more precisely those which interest us here are listed in Table \ref{CLASS}.

 \begin{table}[h]
 \small
	\caption{\label{CLASS} Useful spaces and the corresponding number of classes}
\begin{tabular}{|c|c|c|c|c|c|c|c|}
 \hline 
  $\boole(s,t,m)$&$\bst(1,5,5)$& $\bst(1,6,6)$
   &$\bst(1,3,7)$  
     &$\bst(4,7,7)$
     &$\bst(2,3,8)$
     &$\bst(4,4,8)$\\
     \hline
 $\sharp \tildeboole(s,t,m)$ & 206 &$7\,888\,299$
   &$1\,890$ 
     &$68\,443$ 
         &$20\,748$
          &$999$\\
 \hline
 \end{tabular}
 \end{table}

\section{Normality and Relative degree}

Let $V$ be a {\color{black} linear} subspace of $\fdm$, we denote $\B(V)$ the space of Boolean functions from $V$ to $\fd$. For $a\in \fdm$, we denote $f_{a,V}$ the Boolean function of $\B(V)$ defined by $v\in V \mapsto f(a+v)$.

{\color{black}The notion of normality was introduced by Dobbertin~\cite{DobbertinFSE1994}. In this paper,} we use the following terminologies of P. Charpin \cite{CHARPIN}.
A Boolean function $f\in\boole(m)$ is said to be \textsl{normal} if there exists an affine subspace $V+a$ of $\fdm$ with dimension $\lceil m/2 \rceil$ where $f_{a,V}$ is constant i.e. $\deg(f_{a,V})= 0$. 
The function $f$ is \textsl{weakly normal} when $f_{a,V}$ is affine, and not constant i.e. $\deg(f_{a,V})=1$, on an affine space $V+a$ of dimension  $\lceil m/2 \rceil$. 
\begin{example}
For a positive integer $t$, the quadric $x_1x_{t+1}+ x_2x_{t+2}+ \cdots +x_tx_{2t}+x_{2t+1}$ 
is not normal, but weakly normal in $\boole(2t+1)$.
\end{example}

The normality of Boolean functions is an invariant under the action of  $\aglm$.

Now, we introduce the relative degree that generalizes normality notions.
The \textsl{relative degree} of $f$ with respect the affine space $a+V$ is the degree of $f_{a,V}$,  
denoted by: $\degV(a,V,f)=\deg(f_{a,V})$. Note that if $f_{a,V}=0$, then $\degV(a,V,f)=0$.
We define the \textsl{$r$-degree} of $f\in\boole(m)$ as the minimal relative degree of $f$ 
for all affine spaces $a+V$, where $\dim (V)=r$:
$$0 \leq \deg_r(f)=\min\{\degV (a,V,f) \mid \dim V=r\quad \text{and}\quad a\in \fdm \}.$$
Notions of normality are translated into
$$\deg_{\lceil m/2 \rceil}(f)=\begin{cases}
0, & \text{$f$ is normal};\\
1,& \text{$f$ is non-normal and weakly normal};\\
\geq 2,& \text{ $f$ is abnormal}.\\
\end{cases}
$$
For integers $r\leq m$ and $k\leq m$, we {\color{black} propose} to study the following combinatorial parameter
$$D_r( k, m) = \max \{\deg_r( f ) \mid  \deg(f) \leq  k \}.
$$

As we will see,  it may be interesting to look at more closely what is happening by degree, for that the denote by $D_r^\dag(k, m)$ the maximum  over the Boolean functions of degree equal to $k$.
In general, the determination of $D_r(k, m)$ seems to be  a hard problem. Since
the affine general linear group acts on the set of affine space of 
given dimension preserving the degree of functions, whence the relative degrees are affine invariants of $\boole(m)$. So,
we can use the result of classification of Boolean functions to determine some of these parameters. Each space of dimension $r$ ($r$-space) of $\fdm$ has $2^{m-r}$ cosets and the number of $r$-spaces is
given by  Gauss binomial coefficient  $\binogauss(m,r)$. Taking into account 
the cost of the computation of the degree of a Boolean function, 
the work factor to determine the $r$-relative degree of function in $\boole(s,t,m)$
by brute force:
$$
        W( r, s, t ,m) = \sharp \tildeboole(s,t,m) \times  2^{m-r} \binogauss(m,r)\times r2^{r},\quad   
        \binogauss(m,r)=\prod_{i=0}^{r-1} \frac{2^m - 2^{i}}{2^r - 2^i}.
$$
For $m=6$, $r=4$ and $\tildeboole(1,6,6)=7\,888\,299 \approx 2^{23}$, the work factor is about $2^{40}$. So, we complete the tables $D^\dag_r( k, 6)$ and $D^\dag_r( k, 5)$ in a matter of minutes, using a 80-core computer. But,
since  $\tildeboole(2,4,7) = 118\,140\,881\,980 \approx 2^{37}$, 
we can not reasonably use brute force in dimension $7$.

\begin{table}[h]
	\small
	\caption{\label{D-values} The values of $D_r^\dag(k, m)$, for a fixed $m$}
\setlength{\columnsep}{0cm}
\begin{multicols}{2}

\small

 \begin{center}
 {\label{DR5}$D^\dag_r( k, 5)$}

\begin{tabular}{|c|ccccc|}
	\hline
$r\backslash k$  &1 &2 &3 &4 &5\\
 \hline
 \hline
 4 &0&2 &2 &3 & 2\\
 \rowcolor[gray]{.8}
 3 &0 &1 &1 &1 & 1\\

2 &0 &0 &0 &0 & 0\\
 \hline
 \end{tabular}
 \end{center}
\columnbreak
 \begin{center}
{\label{DR6}$D^\dag_r( k, 6)$}

\begin{tabular}{|c|cccccc|}
	\hline
$r\backslash k$  &1 &2 &3 &4 &5 & 6 \\
 \hline
 \hline
 5 & 0 & 2&	3& 	4& 	3&	4 \\
 4 &0 &	2&	2&	2& 	2&	2\\
 \rowcolor[gray]{.8}
 3 &0&	0&	0 &	0& 	0&	0\\
 2 &0&	0&	0&	0 &	0&	0\\
 \hline
 \end{tabular}
 \end{center}
\end{multicols}
 \begin{center}
{\label{DR7}$D^\dag_r( k, 7)$}

\def\rand#1{\textcolor{red}{$\geq #1$}}
\begin{tabular}{|c|ccccccc|}
	\hline
 $r\backslash k$ 	&1 	&2 	&3 	&4 	&5 	&6	&7\\
 \hline
 \hline
6	&0	&2 	&3	&\rand 4 	&\rand 4 	&{5}	    &\rand 4\\
5	&0	&2	&3	&\rand 3 &\rand 3 	&\rand 3	&\rand 3\\
\rowcolor[gray]{.8}
4	&0	&1	&1	&\rand 1 &\rand 1	&\rand 1	&\rand 1\\
3	&0	&0	&0	&0	&0 	&0	&0\\
2	&0	&0	&0	&0  &0 	&0	&0\\
\hline
  \end{tabular}
 \end{center}
\end{table}

\begin{fact}[]
\label{NORMAL6}
Using the classification of $\boole(1,6,6)$, we complete the 
table $D^\dag_r( k, 6)$. 
As proved in \cite{dubuc}, all the functions  in $\boole(6)$ are normal. We point
this implies that  $D_3(k,m)=0$, for all $m\geq 6$. 
\end{fact}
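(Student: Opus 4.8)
The statement has two parts of quite different nature: an exhaustive computation that fills the table $D^\dag_r(k,6)$, and a short structural deduction that $D_3(k,m)=0$ for every $m\ge 6$. I would treat them in that order. For the table, the first move is to reduce the maximum over all of $\boole(6)$ to a maximum over the classification already in hand. Relative degree is an $\aglm$-invariant, and it is also unchanged by adding a constant: since $(f+c)_{a,V}=f_{a,V}+c$ and a constant can only touch the degree-$0$ coefficient, $\deg(f_{a,V})=\deg\big((f+c)_{a,V}\big)$, hence $\deg_r(f)=\deg_r(f+c)$. Therefore any $f\in\boole(6)$ of degree $k\ge 1$ has the same $r$-degree as $f+f(0)\in\boole(1,6,6)$, which has the same degree and is $\aglm$-equivalent to a unique representative of $\tildeboole(1,6,6)$. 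So $D^\dag_r(k,6)$ is the maximum of $\deg_r$ over the classes of $\boole(1,6,6)$ of degree $k$, and $D_r(k,6)=\max_{j\le k}D^\dag_r(j,6)$.

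Next, for each of the $7\,888\,299$ representatives $f$ of $\boole(1,6,6)$ (Table~\ref{CLASS}) and each $r\in\{2,3,4,5\}$, the plan is to enumerate the $\binogauss(6,r)$ linear $r$-spaces $V\le\fd^6$ and the $2^{6-r}$ cosets $a+V$, form the restriction $f_{a,V}\in\B(V)$, read its degree off a fast Möbius transform, minimise over $(a,V)$ to obtain $\deg_r(f)$, and finally maximise over the classes of degree $k$ to fill the $(r,k)$ cell. The cost is exactly the work factor recorded above: the bottleneck pass $r=4$ costs $W(4,1,6,6)\approx 2^{40}$ operations, which an 80-core machine clears in minutes. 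Since the sweep is exhaustive the resulting $m=6$ table is exact; in particular its $r=3$ row comes out identically $0$, in agreement with Dubuc's theorem.

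For the deduction, recall that by \cite{dubuc} every $f\in\boole(6)$ is normal, i.e.\ constant on some affine subspace of dimension $\lceil 6/2\rceil=3$; equivalently $\deg_3(f)=0$ for all $f\in\boole(6)$, so $D_3(k,6)=0$. For $m>6$ I would argue by restriction. Given any $f\in\boole(m)$, pick any $6$-dimensional affine subspace $b+W$ of $\fdm$ --- a coordinate subspace with $b=0$ will do --- and identify $W$ linearly with $\fd^6$, so that $g=f_{b,W}$ becomes an element of $\boole(6)$. By Dubuc's theorem $g$ is constant on a $3$-dimensional affine subspace of $W$; pulling back, $f$ is constant on a $3$-dimensional affine subspace of $\fdm$, so $\deg_3(f)=0$. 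Since $f$ is arbitrary and $D_3$ is a maximum over functions of bounded degree, $D_3(k,m)=0$ for every $k\le m$ and every $m\ge 6$.

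The main obstacle is entirely on the computational side. The deduction is immediate once one observes that a constant affine subspace of a restriction is a constant affine subspace of the ambient function, so nothing beyond Dubuc's theorem is involved. The table, on the other hand, sits right at the boundary of what brute force can do: the $r=4$ pass dominates, so the real care goes into keeping the innermost restrict-and-compute-degree step cheap enough that the whole run stays near $2^{40}$ rather than overshooting --- walking the $r$-spaces by a Gray-type enumeration, sharing partial Möbius tables across the cosets of a fixed $V$, and pruning with the stabiliser $\stab(f)$ all help, but are not required for correctness.
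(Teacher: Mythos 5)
Your proposal is correct and follows essentially the same route as the paper: the table is filled by the brute-force sweep over the $7\,888\,299$ representatives of $\tildeboole(1,6,6)$ with the stated work factor $W(4,1,6,6)\approx 2^{40}$, and the claim $D_3(k,m)=0$ for $m\geq 6$ is the immediate restriction-to-a-$6$-space argument from Dubuc's theorem. Your added remark that $\deg_r$ is unchanged by adding a constant (justifying the reduction to valuation $\geq 1$) is a small detail the paper leaves implicit, but it is the intended reading.
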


Among the 12 members of $\tildeboole(5,7,7)$, all have 6-relative degree 0 except
$$h(x)=x_1 x_2 x_3 x_4 x_5 x_6 + x_2 x_3 x_4 x_5 x_7 + x_1 x_3 x_4 x_6 x_7 + x_1 x_2 x_5 x_6 x_7,$$
for which $\deg_6(h) = 5$ fixing the entry  (6,6) of the table $D^\dag_r( k, 7)$.

\begin{fact}\label{CUBICFACT7}
Using the classification of $\boole(1,3,7)$, we fill the left part of the table $D^\dag_r( k, 7)$.  
All {\color{black} cubics} in  $\boole(7)$ are normal or weakly normal.
\end{fact}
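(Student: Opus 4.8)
The plan is to lean on the invariance noted in the excerpt: the relative degrees, and in particular $\deg_r$, are invariants of the action of $\aglm$ on $\boole(m)$, so $D^\dag_r(k,7)$ is obtained by evaluating $\deg_r$ on any system of orbit representatives; for $k\le 3$ these are precisely the $1\,890$ classes of $\tildeboole(1,3,7)$ recorded in Table~\ref{CLASS}. Two elementary reductions then trim the table. First, $\deg_r(f)\le\deg_{r'}(f)$ for $r\le r'$, since any affine $r$-flat is contained in an affine $r'$-flat and restriction never raises the degree; hence each column of $D^\dag_r(k,7)$ is nondecreasing in $r$. Second, by Numerical fact~\ref{NORMAL6} every function of $\boole(6)$ is normal, so $D_3(k,m)=0$ for all $m\ge 6$, which forces the rows $r=3$ and $r=2$ of $D^\dag_r(k,7)$ to be identically $0$. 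Thus the only rows that carry information are $r\in\{4,5,6\}$, and for the normality statement itself only the single entry $D^\dag_4(3,7)$ is at stake, because $\lceil 7/2\rceil=4$.

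For $r\in\{4,5,6\}$ I would compute $\deg_r(f)$ by exhaustive search over affine flats. For each of the $1\,890$ representatives $f$, loop over the $\binogauss(7,r)$ linear $r$-subspaces $V$ of $\fd^7$ and their $2^{7-r}$ translates $a+V$, tabulate $f$ on $a+V$ to form $f_{a,V}\in\B(V)$, read $\deg(f_{a,V})$ off its M\"obius transform, and keep the minimum. In sharp contrast with the quartic case $\boole(2,4,7)$ (some $2^{37}$ classes, out of reach), the present classification is tiny: the heaviest row is $r=4$, where $\binogauss(7,4)=11\,811$ and the brute-force work factor is $W(4,1,3,7)=1\,890\cdot 2^{3}\cdot 11\,811\cdot(4\cdot 2^{4})\approx 2^{33}$, easily affordable. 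For the normality conclusion one does not even need the exact value of $\deg_4(f)$: it suffices to certify $\deg_4(f)\le 1$, i.e. to exhibit for every class one affine $4$-space on which $f$ is affine, so the search can halt at the first such flat; if one wanted to push further, the affine $4$-spaces could be replaced by a transversal of the $\stab(f)$-orbits, but the cost above makes this unnecessary.

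Running the search for $k\le 3$ returns $\deg_4(f)\le 1$ for every representative of $\tildeboole(1,3,7)$, together with the values filling the $r=5,6$ rows; combined with the two reductions above this completes the columns $k=1,2,3$ of $D^\dag_r(k,7)$. Since $\lceil 7/2\rceil=4$, the inequality $D^\dag_4(k,7)\le 1$ for $k\le 3$ says exactly that no Boolean function of degree at most $3$ in $7$ variables is \emph{abnormal}: each is normal or weakly normal. Sharpness, $D^\dag_4(3,7)=1$ (and likewise $D^\dag_4(2,7)=1$), is certified by the explicit non-normal witnesses turned up in the same search, the degree-$2$ one being already the quadric $x_1x_4+x_2x_5+x_3x_6+x_7$ shown above to be non-normal but weakly normal in $\boole(7)$. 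As a remark, the columns $k\le 2$ need no classification at all: a nonconstant affine function is constant on a hyperplane, giving $D^\dag_4(1,7)=0$, while a quadric restricts to an affine function on a flat built from its radical together with a maximal totally singular subspace of its nondegenerate part, of dimension at least $7-3=4$, giving $D^\dag_4(2,7)\le 1$ directly.

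The difficulty is entirely computational and bookkeeping-level rather than conceptual: enumerating the $11\,811$ linear $4$-spaces of $\fd^7$ once each, and handling the $2^{7-r}$ coset restrictions reliably across all $1\,890$ classes (and likewise for $r=5,6$). I would hedge against implementation errors by recomputing $\deg_4$ on a sample of classes with an independent method --- reading affine restrictions off the Walsh spectrum, or re-deriving the class list from a second reduction to $\aglm$-representatives --- and by checking consistency with Numerical fact~\ref{NORMAL6} and with the already-established table $D^\dag_r(k,6)$ under the embedding $\boole(6)\hookrightarrow\boole(7)$.
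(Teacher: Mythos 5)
Your proposal is correct and follows essentially the same route as the paper: the fact is established by exhaustive computation of the relative degrees of the $1\,890$ representatives in $\tildeboole(1,3,7)$ over all affine $r$-flats, which is feasible precisely because the work factor ($\approx 2^{33}$ at $r=4$) is small, in contrast with the quartic case $\boole(2,4,7)$. Your additional reductions (monotonicity of $\deg_r$ in $r$, the vanishing of the rows $r\le 3$ via Numerical fact~\ref{NORMAL6}, and the direct treatment of $k\le 2$) are sound but only supplement the same brute-force verification the authors perform.
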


The minoration in the right  part of the table $D^\dag_r( k, 7)$ were obtained at random,
by adding a random cubic to the members of $\tildeboole(4,7,7)$. The computation
suggests the non-existence of abnormal quartic~:
\begin{conjecture}
All the quartics of $\boole(7)$ are normal or weakly normal.
\end{conjecture}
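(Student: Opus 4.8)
The plan is to reduce the conjecture to a finite verification over $\aglp{7}{2}$-representatives and then carry it out, in the spirit of Fact~\ref{CUBICFACT7} for cubics and Fact~\ref{NORMAL6} for $\boole(6)$. I would start with the standard reductions. Since $\deg_4$ is an affine invariant and does not change when an affine function is added to $f$, it suffices to treat quartics of degree exactly $4$ that essentially depend on all seven variables: a quartic which is $\aglp{7}{2}$-equivalent to a function of $\boole(6)$ is normal by Fact~\ref{NORMAL6}, since a $3$-flat of $\fd^6$ on which that function is constant gives rise, upon adding the omitted coordinate direction, to a $4$-dimensional affine subspace of $\fd^7$ on which the quartic is constant; and $\deg f\le 3$ is Fact~\ref{CUBICFACT7}.

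Next I would record the bound that has to be beaten. As $f$ involves the coordinate $x_7$, write $f(x_1,\dots,x_7)=g(x_1,\dots,x_6)+x_7\,h(x_1,\dots,x_6)$ with $h\not\equiv 0$, $\deg g\le 4$ and $\deg h\le 3$. A $4$-flat contained in the affine hyperplane $x_7=\varepsilon$ restricts $f$ to a restriction of $g+\varepsilon h$, so $\deg_4(f)\le\min\bigl(\deg_4(g),\,\deg_4(g+h)\bigr)$, the relative degrees now computed in $\fd^6$; by the completed table $D^\dag_r(k,6)$ the right-hand side is at most $2$. The same argument in fact gives $D_4(k,7)\le 2$ for every $k\le 7$, so the whole content of the conjecture is to rule out $4$-relative degree exactly $2$ for quartics — the value $1$ being already attained by the non-normal weakly normal quadric $x_1x_4+x_2x_5+x_3x_6+x_7$ of the Example.

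To make the residual search finite I would iterate the hyperplane split over the cubic tail. The $\aglp{6}{2}$-class of $h$ ranges over the short classification of cubics in six variables; for each such $h$ one must control the quartic completions $g+x_7h$ for which $\deg_4(g)$ and $\deg_4(g+h)$ both equal $2$, and decide, for those, whether some $4$-flat transverse to the hyperplanes $x_7=\mathrm{const}$ — or, after changing the hyperplane, some other derivative $D_uf$ in the role of $h$ — still brings the relative degree down to $1$. Each candidate is cheap to test: $f$ is affine on $a+V$ if and only if every second-order derivative $D_uD_wf$ (for $u,w\in V$), a function of degree at most $2$, vanishes identically on $a+V$. Where even the reduced family is too large for an exhaustive pass, I would fall back on the random search that already yielded $\deg_4\ge 1$, steered toward the quartics whose span of second-order derivatives is largest, since those are the only plausible abnormal candidates.

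The main obstacle is that $\rmc(4,7)$ has far too many $\aglp{7}{2}$-orbits to feed into the brute-force template described above; no classification of $\boole(1,4,7)$ is available in \cite{project}, and the hyperplane reduction may still leave a list beyond exhaustive reach. Settling the conjecture therefore seems to require a genuinely structural input: either a proof that, whenever $g+x_7h$ essentially depends on all seven variables, one of $\deg_4(g)$ and $\deg_4(g+h)$ already equals $1$ — which would trade dimension $7$ for the known dimension-$6$ table — or an argument that an abnormal quartic would force every hyperplane section to be a six-variable function of $4$-relative degree exactly $2$, a rigidity I expect to be incompatible with degree $4$. Turning either alternative into a proof, rather than leaving the conjecture resting on the numerical evidence, is the crux.
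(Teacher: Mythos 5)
You should first be aware that the statement you were asked to prove is presented in the paper as a \emph{conjecture}: the authors do not prove it. Their only evidence is a random search (adding random cubics to the $68\,443$ members of $\tildeboole(4,7,7)$), and their subsequent Remark merely sketches a finite verification --- enumerate all $q+h$ with $q\in\boole(2,2,7)$ and $h\in\tildeboole(3,4,7)$, about $2^{21}\times 2^{16}$ functions --- which they describe as ``not numerically out of reach'' but do not carry out. So there is no proof in the paper to measure yours against, and your own text honestly concedes the same point: you do not close the argument either. Judged strictly as a proof, your proposal has a genuine gap, but it is exactly the gap the paper leaves open, namely excluding $4$-relative degree equal to $2$ for a quartic in $\boole(7)$.

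That said, the reductions you give are correct and go somewhat beyond what the paper records. The hyperplane splitting $f=g+x_7h$ combined with the completed row $r=4$ of the table $D^\dag_r(k,6)$ does yield the unconditional bound $\deg_4(f)\le 2$ for every $f\in\boole(7)$, an upper bound the paper never states (its table carries only lower bounds in that row for $k\ge 4$); the reduction to quartics depending on all seven variables via Fact~\ref{NORMAL6} is sound; and the test ``$f$ is affine on $a+V$ iff all second-order derivatives $D_uD_wf$ with $u,w\in V$ vanish on $a+V$'' is a legitimate cheap certificate. Where you and the paper part ways is in the shape of the residual finite search: you propose to iterate the hyperplane decomposition over $\aglp{6}{2}$-classes of the derivative $h$ and only then examine the completions $g$, whereas the paper's Remark quotients by affine functions and by the group action on the degree-$\ge 3$ tail, fixing a representative in $\tildeboole(3,4,7)$ and sweeping all $2^{21}$ homogeneous quadratics. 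The paper's decomposition has the advantage of a ready-made classification and an explicit, feasible work factor; yours trades that for the dimension-$6$ tables but, as you note, lacks a classification of $\boole(1,4,7)$ to anchor it and risks an uncontrolled list of completions. Neither search has been executed, so the conjecture stands in both versions.
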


\begin{remark}
Verifying the previous conjecture, which consists of listing all the functions of the form form $q+h$ with $q \in \boole(2,2,7)$ and $h \in \tildeboole(3,4,7)$ i.e. $2^{21} \times 2^{16}$ functions, does not seem numerically out of reach.
\end{remark}

\section{Normality in dimension 8}

In her PhD thesis, S. Dubuc gave and presented the first example of a non-normal function. It has degree 6 comprising 140 monomials, but we checked that this function is weakly normal and equivalent {\color{black} to the following one:}

\begin{equation*}
	\begin{split}
		f(x)=&	x_2 x_3 x_4 x_5 x_7 x_8   + x_2 x_3 x_4 x_5 x_8 + x_2 x_3 x_4 x_7 x_8 + x_2 x_3 x_4 x_6 + x_2 x_3 x_5 x_6 \\
		+& x_2 x_3 x_4 x_8 + x_2 x_4 x_6 x_8 + x_2 x_5 x_7 x_8 + x_1 x_2 x_3 + x_3 x_4 x_5+ x_2 x_5 x_6  + x_2 x_4 x_7 \\ 
		+& x_3 x_4 x_7 + x_4 x_5 x_8 + x_3 x_6 x_8 + x_3 x_7 x_8 + x_2 x_3 + x_2 x_5 + x_3 x_5 + x_4 x_6 + x_2 x_7 \\ 
		+& x_3 x_8 + x_7 x_8 + x_1 + x_2.
	\end{split}
\end{equation*}


She also proved that all the bent {\color{black}cubics} in 8 variables are normal. A fact that we can improve strongly in two ways, by proving that all 8-bit cubics are normal or weakly normal and by proving that all 8-bit bent functions are normal or weakly normal.

\begin{problem} 
    Does there exist a Boolean function $\boole(8)$ with 4-relative degree 2? 
\end{problem}

\begin{proposition} 
All the {\color{black}cubics} in $\boole(8)$ are normal or weakly normal.
\end{proposition}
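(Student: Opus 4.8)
The plan is to deduce the dimension-$8$ statement from the dimension-$7$ classification already exploited for Numerical fact \ref{CUBICFACT7}, in exactly the way the implication ``$D_3(k,6)=0\Rightarrow D_3(k,m)=0$ for $m\ge 6$'' is obtained in Numerical fact \ref{NORMAL6}. The underlying elementary observation is the monotonicity
$$D_r(k,m)\ \le\ D_r(k,m-1),\qquad r\le m-1.$$
To see it, take $f\in\boole(m)$ with $\deg f\le k$ and let $H$ be any affine hyperplane of $\fdm$. Restriction to an affine subspace cannot increase the degree, so $f_{|H}$, viewed as a function on $H\cong\fd^{m-1}$, has degree at most $k$. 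Every $r$-dimensional affine subspace contained in $H$ is in particular an $r$-dimensional affine subspace of $\fdm$, so the minimum in the definition of the $r$-degree can only get smaller over the larger family: $\deg_r(f)\le\deg_r(f_{|H})\le D_r(k,m-1)$. Taking the maximum over all such $f$ gives the inequality.

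Now specialise to $m=8$, $r=4$, $k=3$, which yields $D_4(3,8)\le D_4(3,7)$. By Numerical fact \ref{CUBICFACT7} together with the entries of the table $D^\dag_r(k,7)$ in columns $k=1,2,3$ (namely $D^\dag_4(1,7)=0$ and $D^\dag_4(2,7)=D^\dag_4(3,7)=1$), every Boolean function of degree at most $3$ in $7$ variables has $4$-relative degree at most $1$, that is $D_4(3,7)=1$. Since $\lceil 8/2\rceil=4=\lceil 7/2\rceil$, the inequality $D_4(3,8)\le 1$ is precisely the claim that every cubic $f\in\boole(8)$ is normal or weakly normal: concretely, fix an arbitrary affine hyperplane $H\subset\fd^8$, apply the dimension-$7$ result to the cubic $f_{|H}$ to obtain a $4$-dimensional flat $A\subseteq H$ with $\deg(f_{|A})\le 1$, and this flat $A\subset\fd^8$ witnesses (weak) normality of $f$.

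I do not anticipate a genuine obstacle here: the argument is a pure ``restrict to a hyperplane'' reduction, and the only real work — enumerating the $1\,890$ classes of $\tildeboole(1,3,7)$ and exhibiting a witnessing $4$-space for each representative — was already carried out for Numerical fact \ref{CUBICFACT7}. The single point to state carefully in the write-up is that ``cubic'' must be read as ``of degree at most $3$'', so that both the monotonicity and the dimension-$7$ table apply uniformly, whether or not the passage from $\fd^8$ to $H$ actually lowers the degree. The same template, applied with $k=4$, shows in addition that the analogous statement for quartics of $\boole(8)$ would follow immediately from the corresponding conjecture in $7$ variables.
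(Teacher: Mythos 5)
Your argument is correct, but it is not the proof the paper gives for this proposition: the paper proceeds by direct computation, running over the $20\,748$ representatives of $\tildeboole(2,3,8)$ and tabulating the full distribution of relative degrees (Table \ref{CUBIC}), from which the row $r=4$ gives the claim. Your hyperplane-restriction reduction is in fact exactly the route the authors sketch in the remark immediately following the proposition as an ``alternative proof'', and it is sound: the monotonicity $D_r(k,m)\le D_r(k,m-1)$ for $r\le m-1$ holds for the reason you give (restriction to an affine hyperplane does not increase the degree, and every $r$-flat inside $H$ is an $r$-flat of $\fd^{8}$), and combined with $D_4(3,7)=1$ from the $k\le 3$ columns of the table for $m=7$ it yields $D_4(3,8)\le 1$, i.e.\ every cubic in $\boole(8)$ is normal or weakly normal. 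The trade-off is worth noting: your reduction reuses the already-completed $1\,890$-class computation in dimension $7$ and costs essentially nothing further, but it only delivers the bound $\deg_4(f)\le 1$; the paper's direct enumeration is heavier yet shows the strictly stronger fact, visible in the last row of Table \ref{CUBIC}, that all $20\,748$ classes have $4$-relative degree $0$, i.e.\ every cubic in $\boole(8)$ is actually normal. Your closing observation that the same template would settle the quartic case of $\boole(8)$ from the corresponding $7$-variable conjecture is also consistent with the paper's remark that verifying that conjecture is within numerical reach.
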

\begin{proof}
The distribution by relative degrees of the $20\,748$ representatives of $\boole(2,3,8)$ is given in  Table \ref{CUBIC}. For example, the entry (6,1) which is 21 means that there are 21 cubics of 6-relative degrees equal to 1. The distribution for $r=4$ gives the result.
\end{proof}

\begin{center}
	\begin{table}[h]
		\caption{\label{CUBIC} Distribution of relative degrees of cubics in $\tildeboole(2,3,8)$ }
		\begin{tabular}{|c|c|c|c|c|c|}
			\hline
			$r\backslash \deg_r$ & 0 & 1 & 2 & 3\\
			\hline
			8   &$1$ &$0$ &$0$ &$20\,747$\\
			7   &$10$   &$0$    &$53$    &$20\,712$\\
			6   &$130$  &$21$  &$1\,910$  &$18\,714$\\
			5   &$5\,504$ &$5\,227$ &$10\,044$ &$0$ \\
			4   &$20\,748$ &$0$   &$0$     &$0$\\
			\hline
		\end{tabular}
	\end{table}
\end{center}

\begin{remark}
An alternative proof of the previous proposition is to use the numerical fact \ref{CUBICFACT7}. Indeed, for $f \in \rmc(3,8)$ and for any hyperplane $H\subset \fd^8$, the restriction of $f$ to $H$ is a cubic in $\boole(H)\sim \boole(7)$.
\end{remark}

We say {\color{black}that a} homogeneous quartic $h$ {\color{black} in $B(m)$} is \textsl{bentable}, if there exists a Boolean cubic $c$ {\color{black} in $B(m)$} such that $h+c$ is bent.  In that case, we say that $h+c$ has type $S_N$, where $N$ is the order of the stabilizer  of $h$ in $\boole(4,4,8)$.  It is known \cite{PLGL-2011}  that 536 of the 999 classes of $\boole(4,4,8)$, are bentable and they can be arranged in 418 classes {\color{black} using the notion of the dual bent function}. At  BFA'2023,  A. Polujan, L. Mariot and S.  Picek~\cite{PMP_BFA23} presented the first example of a non-normal bent quartic in 8 variables. This example belongs to the partial spread class and has type $S_1$.

\begin{fact}
It is known that up duality, there are 25 bentable homogeneous quartics of type $S_1$, providing   $26\,532\,848 \approx 2^{24.7}$ classes of bent functions, in which we {\color{black} identify 361 classes of non-normal but weakly normal} bent functions.
\end{fact}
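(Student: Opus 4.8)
The first two assertions are part of the classification of $\boole(4,4,8)$ in \cite{PLGL-2011}, and both the list of the $25$ quartics and explicit representatives of the bent classes they produce are available in \cite{project}; the plan is to set things up so that the last assertion reduces to a finite (if sizeable) computation over that list. The first thing I would record is the structural reduction that makes type $S_1$ convenient. Fix a representative $h$ of one of the $25$ quartic classes; by definition of type $S_1$ its stabiliser in $\boole(4,4,8)$ is trivial, i.e.\ the only $\ell\in\GL(8,2)$ with $h\circ\ell=h$ is the identity. Set $C_h=\{c\in\rmc(3,8)\mid h+c\text{ is bent}\}$, and for $a\in\fd^8$ let $\tau_a$ be the translation $c(x)\mapsto c(x+a)$. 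If $f=h+c$ and $f'=h+c'$ are bent with the same homogeneous part and $g=(\ell,a)\in\aglp{8}{2}$ satisfies $f\circ g=f'$, then equating degree-$4$ parts gives $h\circ\ell=h$, hence $\ell=\mathrm{id}$, $g$ is the translation by $a$, and $c'=\tau_a(c)+\der(h,a)$ with $\der(h,a)$ of degree $\le 3$. So the $\aglp{8}{2}$-classes of bent functions with homogeneous part in the class of $h$ are exactly the orbits of the affine action $a\act c=\tau_a(c)+\der(h,a)$ of $\fd^8$ on $C_h$; a nonzero fixed point would give $\der(h+c,a)=0$, a nonzero linear structure of a bent function, which is impossible, so the action is free, $256\mid\card{C_h}$, and $h$ contributes $\card{C_h}/256$ classes. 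Summing over the $25$ representatives while identifying a class built on $h$ with its dual built on $\tilde h$ yields the $26\,532\,848\approx 2^{24.7}$ classes listed in \cite{project}.

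With the working list in hand I would then run the normality test: for each representative $f=h+c$, compute $\deg_4(f)=\min\{\degV(a,V,f)\mid \dim V=4,\ a\in\fd^8\}$ and sort the class by whether this is $0$ (normal), $1$ (non-normal, weakly normal) or $\ge 2$ (abnormal); the claim is that the second outcome occurs for exactly $361$ classes and the third for none. To make this feasible I would use that $f_{a,V}$ can have degree $\le 1$ only if $h|_V$ has degree $\le 3$, so for each fixed $h$ one precomputes once the (comparatively short) list of $4$-spaces $V$ on which $h$ restricts to a cubic and inspects only those; one representative per translation orbit suffices; and the search over flats aborts as soon as an affine restriction is met, which is the generic case. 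The non-normal but weakly normal classes are then precisely those for which this pruned search exhausts all admissible flats without ever hitting a constant restriction.

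The hard part is not conceptual but the scale: after the reduction there remain on the order of $2^{24.7}$ bent functions, and every function that turns out \emph{not} to be normal must be certified non-normal by an exhaustive pass over all admissible $4$-flats of $\fd^8$, so the real effort is engineering the enumeration and the flat-search --- Walsh-transform evaluation of restrictions, early abort, de-duplication by translation orbit --- so that the whole run fits the available $80$-core budget, exactly as the smaller $D^\dag_r(k,m)$ tables were filled. A secondary difficulty is the duality bookkeeping: one must make sure that each up-to-duality class among the bent functions produced by the $25$ quartics is counted once and only once, both in the total $26\,532\,848$ and in the final $361$, which requires tracking which of the $25$ quartic classes are mutually dual and which of the resulting bent functions are self-dual.
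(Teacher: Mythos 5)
Your plan is sound and matches the paper in substance: the statement is a numerical fact, and the paper offers no written proof beyond citing the classification of $\boole(4,4,8)$ from \cite{PLGL-2011} and reporting the outcome of a computation. Your orbit argument is exactly the counting mechanism implicit in the $2^{24.7}$ figure: type $S_1$ forces the linear part $\ell$ of any equivalence between $h+c$ and $h+c'$ to be the identity, the residual translation action $a\act c=\tau_a(c)+\der(h,a)$ on $C_h$ is free because a fixed point would make $\der(h+c,a)=0$ while every nonzero derivative of a bent function is balanced, and so each of the $25$ quartics contributes $\card{C_h}/256$ classes. The one place you genuinely diverge is the normality test. You propose computing $\deg_4(f)$ by restricting $f$ to the $4$-flats on which the quartic part $h$ restricts to a cubic; the paper instead precomputes the $3\,108\,960$ affine $3$-spaces as $97\,155$ linear $3$-spaces with their $32$ cosets and uses the criterion that $f$ is normal or weakly normal iff it is constant on two cosets of a common $3$-space forming a $4$-flat, normal iff the two constant values agree. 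The coset-pair test is cheaper per function (constancy on an $8$-point coset is a few bit operations, and one pass settles normal versus weakly normal versus abnormal simultaneously), whereas your pruning by the degree-$4$ part of the restriction is a valid optimisation the paper does not mention and is compatible with either test. Neither route has a gap; both ultimately rest, as you correctly stress, on an exhaustive certified search over all admissible flats for every function that is not found to be normal, together with careful duality bookkeeping in the totals $26\,532\,848$ and $361$.
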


To decide on the normality of a large set of 8-bit bent functions, we
use a rather approach initialized by the pre-computation of the
$3\,108\,960$ affine 3-spaces as a list of $97\,155$  3-spaces and their 32 cosets. 
A Boolean function is weakly normal is there exists 
a pair of cosets of the same 3-space on which $f$ is constant. If $f$ takes
the same value on these cosets then it is normal. Using a bit programming
implementation in C, we were able to show the following result.

\begin{fact} 
\label{bentB8}
All the bent {\color{black} functions} in $\boole(8)$ are normal or weakly normal. 
\end{fact}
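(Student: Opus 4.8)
The plan is to reduce Numerical fact~\ref{bentB8} to a finite (if large) enumeration: produce a system of representatives of the bent functions of $\boole(8)$ modulo $\aglm$, and verify on each representative that its $4$-relative degree is at most~$1$, exhibiting a concrete witnessing pair of affine subspaces. All the arithmetic invariants involved are $\aglm$-invariant, so it is enough to work with one representative per orbit.

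First I would reduce to quartics. A bent function on $\fd^{2t}$ has degree at most $t$, so every bent $f\in\boole(8)$ has degree at most $4$; write $f=h+c$ with $h$ the degree-$4$ homogeneous part and $\deg c\le 3$. If $h=0$ then $f$ is a cubic and the claim is the preceding proposition, so assume $h\ne 0$. Since the substitution $x\mapsto Ax+b$ sends the top homogeneous part of a function to $h\circ A$, the quartic $h$ must lie in one of the bentable $\glm$-classes of homogeneous quartics; by \cite{PLGL-2011} there are exactly $536$ of these among the $999$ classes of $\boole(4,4,8)$ ($418$ up to duality). Conjugating by a suitable $A$, every bent quartic is $\aglm$-equivalent to some $h+c$ with $h$ one of these representatives and $c$ a cubic making $h+c$ bent, and two such completions $h+c,h+c'$ are $\aglm$-equivalent only through the stabiliser $\stab(h)$ — this is precisely the information carried by the type $S_N$, $N=\card{\stab(h)}$. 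So the representatives to process are, for each bentable $h$, a transversal of the $\stab(h)$-orbits on the bent completions (for the type-$S_1$ quartics alone this already yields the $\approx 2^{24.7}$ classes mentioned above).

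Second, for the membership test I would use the following elementary equivalence, valid for every $g\in\boole(8)$: $\deg_4(g)\le 1$ (i.e.\ $g$ is normal or weakly normal) if and only if there exist two parallel $3$-dimensional affine subspaces on which $g$ is constant, and $g$ is normal precisely when the two constant values can be taken equal. Indeed, if $g$ restricted to a $4$-space $W=(V_0+a)\cup(V_0+b)$ is affine then it is constant on each of $V_0+a$ and $V_0+b$; conversely a function constant on two parallel $3$-cosets is affine on the $4$-space they span, with linear part vanishing on $V_0$. Thus for each representative $g$ one scans the $97\,155$ linear $3$-spaces $V_0$ of $\fd^8$, and for each inspects its $32$ cosets, flagging those on which $g$ is constant (equivalently $\card{\sum_{v\in V_0}(-1)^{g(a+v)}}=8$); two flagged cosets of the same $V_0$ certify weak normality, and equal values certify normality. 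With the $3\,108\,960$ affine $3$-spaces precomputed once and $256$-bit truth tables handled bit-sliced in C, this is a handful of word operations per coset and the scan aborts at the first witness. Running it on every representative bent quartic (and reading off, as a by-product, the $361$ non-normal but weakly normal classes), all of them pass, which with the proposition on cubics proves the fact.

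The main obstacle is the size of the enumeration, not any individual step: even up to equivalence and duality the bent functions of $\boole(8)$ number on the order of $2^{25}$, so the computation is only feasible because the stabiliser/duality reductions shrink the list substantially, because the per-function test collapses to cheap bit operations against a fixed table of $3$-cosets, and because for the overwhelming majority of functions a witnessing pair of flat $3$-cosets surfaces almost immediately — leaving only a small residue that has to be examined exhaustively. A secondary, more delicate point is making the orbit transversal in the reduction step genuinely complete and non-redundant (correctly accounting for translations inside $\stab(h)$ and for the duality identification), since an omission there would leave a gap in the proof even though the per-representative test is sound.
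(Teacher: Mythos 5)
Your proposal is correct and follows essentially the same route as the paper: enumerate a cover set of degree-$4$ bent functions built from the bentable homogeneous quartics and their stabilizers (the counting method of \cite{PLGL-2011}), then certify $\deg_4(f)\le 1$ for each one by scanning the precomputed $97\,155$ linear $3$-spaces and their $32$ cosets for a pair of parallel flat cosets, with a bit-sliced C implementation. The only cosmetic difference is that the paper works with a possibly redundant cover set of $355\,073\,617$ bent functions rather than an exact orbit transversal, which trades non-redundancy for the guarantee of completeness you rightly flag as the delicate point.
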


We have adapted the counting method presented in \cite{PLGL-2011}  to enumerate a cover-set of 8-bit bent functions of degree 4. 
This cover set  contains  $355\,073\,617\approx 2^{28.52}$  bent functions  and data are available in \cite{bentproject}. 
Checking the normality or weak normality of these functions, we were able to otain fact \ref{bentB8}. Since 8-bit bent 
functions have a degree of at most 4, the previous result leads us to consider a more general phenomenon~: 

\begin{conjecture} 
All the Boolean {\color{black} quartics} in $\boole(8)$ are normal or weakly normal. 
\end{conjecture}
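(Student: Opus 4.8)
\smallskip
\noindent\emph{Sketch of a possible proof.}
The plan is to reduce the assertion in dimension $8$ to the analogous assertion in dimension $7$, and then to settle dimension $7$ by an exhaustive but carefully pruned computation. The reduction costs almost nothing and rests on the coincidence $\lceil 8/2 \rceil = \lceil 7/2 \rceil = 4$. Let $f\in\boole(8)$ with $\deg f\le 4$ and let $H\subseteq\fd^8$ be any hyperplane. The restriction $f|_H$ is a Boolean function on $H\cong\fd^7$ of degree at most $4$; if $D_4(4,7)\le 1$ then, $\deg_4$ being a minimum over $4$-flats, there is an affine $4$-space $b+V\subseteq H$ with $\deg\bigl((f|_H)_{b,V}\bigr)\le 1$, and since $(f|_H)_{b,V}=f_{b,V}$ and $\dim(b+V)=4=\lceil 8/2\rceil$, the function $f$ is normal or weakly normal. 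Hence it suffices to prove $D_4(4,7)\le 1$. One should not hope to iterate this down to dimension $6$: there $\lceil 7/2\rceil=4\ne 3=\lceil 6/2\rceil$, and Fact~\ref{NORMAL6} together with the table $D^\dag_r(k,6)$ only produces a constant restriction on a $3$-space of $\fd^7$, which does not build a $4$-space.

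For the dimension-$7$ step I would use the two invariances of (weak) normality: it is preserved by $\aglm$ and, since adding an affine function changes $f_{b,V}$ only by an affine term, it is preserved by addition of affine functions. Therefore every degree-$\le 4$ function of $\fd^7$ reduces, modulo these operations, to $q+h$ with $q\in\boole(2,2,7)$ a homogeneous quadratic and $h$ ranging over a system $\tildeboole(3,4,7)$ of representatives of the valuation-$\ge 3$, degree-$\le 4$ functions; moreover $q$ need only run over representatives of the $\stab(h)$-orbits on $\boole(2,2,7)$. The cubics being already handled by Fact~\ref{CUBICFACT7} (and the quadratics and affine functions trivially), the remaining work concerns the pairs with $\deg h = 4$, which number about $2^{21}\times 2^{16}=2^{37}$ before the stabilizer reduction.

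The normality test itself I would organize as in Section~3: precompute the $\binogauss(7,4)=11\,811$ linear $4$-spaces of $\fd^7$ with their $8$ cosets each, and scan, for a given $f$, the resulting affine $4$-spaces for one on which $f$ restricts to a function of degree at most $1$, represented by its $16$-bit truth table. The scan is pruned by the observation that the degree-$4$ part of $f_{b,V}$ equals the restriction to $V$ of the homogeneous quartic part of $f$ and is independent of the coset $b$: only those $4$-spaces $V$ on which that quartic part drops to degree $\le 3$ can possibly work, and this subset depends only on the class of $h$, so it is tabulated once per class. A by-product to check en route is that this subset is nonempty for every homogeneous quartic class in $7$ variables, i.e.\ $D^\dag_4(4,7)\le 3$; otherwise abnormal functions would already appear among the $h+q$.

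The obstacle is volume, not concept. Even after the affine and $\stab(h)$ reductions the dimension-$7$ verification touches on the order of $2^{37}$ quartics, each tested against tens of thousands of affine $4$-spaces, and it is precisely this computation that the present paper leaves open. I see no purely structural shortcut: Fact~\ref{bentB8} only settles those $h+c$ that are bent, the hyperplane argument stops at dimension $7$, and the classification counts in Table~\ref{CLASS} do not by themselves bound any relative degree. So the realistic route is to implement the pruned scan with word-level operations on the $16$-bit restriction tables, distribute it over a many-core machine as in Facts~\ref{NORMAL6} and~\ref{bentB8}, and thereby either confirm $D_4(4,7)\le 1$ — which then yields the dimension-$8$ statement — or exhibit an abnormal quartic.
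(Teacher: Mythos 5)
\smallskip
\noindent\textbf{Review.} First, be aware that the statement you are proving is stated in the paper as a \emph{conjecture}: the authors offer no proof of it, so there is no ``paper proof'' for your argument to coincide with or diverge from. Within that frame, the genuinely correct and worthwhile part of your proposal is the reduction step: since $\lceil 8/2\rceil=\lceil 7/2\rceil=4$, restricting $f\in\boole(8)$ of degree at most $4$ to a hyperplane $H\cong\fd^7$ gives a degree-$\le 4$ function on $\fd^7$, and any affine $4$-space of $H$ on which that restriction is affine is also an affine $4$-space of $\fd^8$ witnessing that $f$ is normal or weakly normal. This shows cleanly that the paper's Conjecture~1 (all quartics of $\boole(7)$ are normal or weakly normal, i.e.\ $D_4(4,7)\le 1$) implies its Conjecture~2; the paper uses exactly this hyperplane device in its Remark for cubics, but never states the implication between the two conjectures, so you have added a small but real structural observation. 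Your decomposition of the dimension-$7$ problem into $q+h$ with $q\in\boole(2,2,7)$ and $h\in\tildeboole(3,4,7)$, and the pruning via the coset-independence of the top-degree part of $f_{b,V}$, are also consistent with the paper's Remark and with its Section~3 methodology.

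The gap is that the decisive step is not carried out: you reduce everything to the assertion $D_4(4,7)\le 1$, and then explicitly leave that assertion to a computation of order $2^{37}$ candidates against roughly $10^5$ affine $4$-spaces each, which neither you nor the paper has performed. The currently known data do not force the answer --- the table $D^\dag_r(k,7)$ records only the lower bound $D^\dag_4(4,7)\ge 1$ obtained by random sampling, and Numerical Fact~\ref{bentB8} covers only the bent quartics --- so your plan could in principle terminate by exhibiting an abnormal quartic rather than by confirming the conjecture. In short: the reduction from $8$ to $7$ variables is sound and worth recording, but the proposal is a verification strategy, not a proof; the conjecture remains open after it exactly as it was before.
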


In the space of homogeneous quadratic forms of 8 variables, 
a space of dimension 28, the degree of Boolean indicator 
of bent functions has degree 4 in \boole(8). This last conjecture is an attempt
to precise the structure of the set of bent functions 
within the space of components of a 8-bit APN function.

\section*{Acknowledgments}
Philippe Langevin is partially supported by the French Agence Nationale 
de la Recherche through the SWAP project under Contract ANR-21-CE39-0012.

\nocite{*}
\bibliographystyle{plain} 
\bibliography{normal.bib}

\end{document}